\theoremstyle{plain}
\newtheorem{theorem}{Theorem}
\theoremstyle{Definition}
\newcommand{\cB}{\mathcal{B}}
\newcommand{\dist}{\mathrm{dist}}
\newcommand{\diam}{\mathrm{diam}}
\newdimen\omsq  \omsq=10pt
\newdimen\omrule    \omrule=1pt
\newdimen\omint
\newif\ifvth    \newif\ifhth    \newif\ifomblank
\def\OM#1{%
    \omint=\omsq    \advance\omint-\omrule
    \ifx\relax#1%
    \else
      \ifx\\#1 \newline\null \hthtrue \ifvth\vthfalse\else\vskip-\omrule\vthtrue\fi
      \else%
        \ifx .#1\hskip\ifhth \omrule\else \omint\fi
        \else%
          \ifx +#1\def\colour{black}\fi%
          \ifx -#1\def\colour{black}\fi%
          \ifx |#1\def\colour{black}\fi%
          \ifx 2#1\def\colour{lightgray}\fi%
          \ifx 3#1\def\colour{gray}\fi%
          \ifx 4#1\def\colour{darkgray}\fi%
          \ifx @#1\def\colour{black}\fi%
          \ifx r#1\def\colour{red}\fi%
          \ifx g#1\def\colour{green}\fi%
          \ifx l#1\def\colour{lime}\fi%
          \ifx o#1\def\colour{olive}\fi%
          \ifx O#1\def\colour{orange}\fi%
          \ifx b#1\def\colour{blue}\fi%
          \ifx t#1\def\colour{teal}\fi%
          \ifx y#1\def\colour{yellow}\fi%
          \ifx m#1\def\colour{magenta}\fi%
          \ifx c#1\def\colour{cyan}\fi%
          \ifx p#1\def\colour{pink}\fi%
          \ifx P#1\def\colour{purple}\fi%
          \ifx w#1\def\colour{white}\fi%
          \textcolor{\colour}{\rule{\ifhth\omrule\else\omsq\fi}{\ifvth\omrule\else\omsq\fi}}%
          \ifhth\else\hskip -\omrule\fi%
        \fi%
        \ifhth\hthfalse\else\hthtrue\fi%
      \fi%
    \expandafter\OM%
    \fi}
\title{The bright side of simple heuristics for the TSP}
\author{Alan Frieze\thanks{Research Supported in part by NSF grant DMS1952285. email: frieze@cmu.edu}}
\author{Wesley Pegden\thanks{Research Supported in part by NSF grant DMS1700365. email: wes@math.cmu.edu}}
\affil{Department of Mathematical Sciences, Carnegie Mellon University}
\newcommand{\cN}{\mathcal{N}}
\newcommand{\Ndp}{\cN_\delta^{\mathrm{pack}}}
\newcommand{\g}{\gamma}
\newcommand{\cM}{\mathcal{M}}
\newcommand{\cE}{\mathcal{E}}
\def\E{\mathbb{E}}
\def\cE{\mathcal E}
\def\real{\mathbf{R}}
\newcommand{\bfrac}[2]{\left(\frac{#1}{#2}\right)}
\begin{document}
\maketitle
\begin{abstract}
The greedy and nearest-neighbor TSP heuristics can both have $\log n$ approximation factors from optimal in worst case, even just for $n$ points in Euclidean space.  In this note, we show that this approximation factor is only realized when the optimal tour is unusually short.  In particular, for points from any fixed $d$-Ahlfor's regular metric space (which includes any $d$-manifold like the $d$-cube $[0,1]^d$ in the case $d$ is an integer but also fractals of dimension $d$ when $d$ is real-valued), our results imply that the greedy and nearest-neighbor heuristics have \emph{additive} errors from optimal on the order of the \emph{optimal} tour length through \emph{random} points in the same space, for $d>1$.
\end{abstract}

\section{Introduction}
Papadimitriou \cite{Papa} showed that finding an optimum Traveling Salespesron Tour is NP-hard even for points in Euclidean space, while Arora \cite{Aro} and Mithcell \cite{Mit} give polynomial-time approximation schemes for the Euclidean TSP.   In practice these have resisted efficient implementations, and in practice, Euclidean TSP approximation still leans heavily on heuristics which are not known to be asymptotically optimal.  For metric TSP, Christofides algorithm achieves an approximation ratio of 1.5, which saw slight improvement with the recent breakthrough of Karlin, Klein, Gharan, and Shayan \cite{Kar}.

Perhaps the simplest heuristics to find a tour through $n$ points are the Nearest Neighbor heuristic, which grows (and in the end, closes) a path by jumping at each step to the nearest unvisited point, and the Greedy heuristic, which at each step chooses the shortest available edge which would not create any vertices of degree 3 or close a cycle unless on the $n$th step.  For $n$ points in an arbitrary metric space, each of these heuristics is known to give a tour within $\log n$ of optimal \cite{greedy,nearest}, and examples are known which realize these approximation ratios, even just in Euclidean space. But our main result implies that for $n$ points in the unit square whose optimal tour has length $\Omega(\sqrt{n})$ (as is the typical case), the Greedy and Nearest Neighbor heuristics will both return a tour whose length is within a constant factor of optimal.

We will prove our results not just for full-dimensional Euclidean space but for any sufficiently regular metric space with dimension $d>1$; the point of this generality is to emphasize that for greedy or nearest-neighbor algorithms to have poor approximation ratios on some input, it is really necessary that the the input admits an unexpectedly short tour given the space its points are taken from, rather than, say, just because the input was actually chosen from a lower dimensional subset of the space than expected.

\bigskip

A metric space $\cM$ equipped with a measure $\mu$ is $d$-\emph{Ahlfor's regular} if there are constants $C,D$ so that 
\begin{equation}\label{e.ahlfors}
Cr^d\leq \mu(B(p,r))\leq Dr^d
\end{equation}
for all $p\in \cM$ and $0<r\leq \diam (\cM)$.  Here $B(p,r)$ is the ball of radius $r$ centred at $p$. Simple examples of regular metric spaces include subspaces of Euclidean space like unit cubes under Lebesgue measure (having integer dimensions) or fractals like the Sierpinski gasket under the Hausdorff measure (having intermediate dimensions)---for example, the metric space induced in Euclidean space by any fractal generated by an iterative function system satisfying the open set condition is Ahlfor's regular for some $d$, for the Hausdorff measure of appropriate dimension.

We will prove the following about optimal TSP tours in Ahlfor's-regular spaces:
\begin{theorem}\label{t.lb}
Suppose $x_1,x_2,\dots$ is a sequence of i.i.d points drawn from a $d$-Ahlfor's regular probability measure on the metric space $\cM$.  Then there exists a constant $C$ so that for $X_n=\{x_1,\dots,x_n\}$, the length of the optimal tour through $X_n$ has length at least $Cn^{1-\frac 1 d}$ for all sufficiently large $n$, with probability 1.
\end{theorem}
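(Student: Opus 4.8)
The plan is to bound the optimal tour length below by a sum of nearest-neighbour distances, estimate the expectation of that sum using only the \emph{upper} Ahlfor's bound, and then concentrate; the monotonicity and packing features of the problem are what make this work all the way down to $d>1$.

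\emph{Step 1: reduction to nearest-neighbour distances.} For $x_i\in X_n$ write $D_i=\min_{j\ne i}\dist(x_i,x_j)$. In any tour through $X_n$, if $x_i$ is incident to edges of lengths $e_i^-,e_i^+$ then $D_i\le\min(e_i^-,e_i^+)\le\tfrac12(e_i^-+e_i^+)$; summing over $i$, and noting each tour edge is counted at its two endpoints, gives $\sum_{i=1}^nD_i\le$ (tour length), so the optimal tour through $X_n$ has length at least $\sum_{i=1}^nD_i$. To make the later concentration step work uniformly for $d>1$ I would pass to the truncated variables $\widetilde D_i=\min\{D_i,n^{-1/d}\}$, so it suffices to show $\sum_i\widetilde D_i\ge Cn^{1-1/d}$ eventually, almost surely.

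\emph{Step 2: expectation.} Conditioning on $x_1$, $\Pr[D_1\ge t]=\E\bigl[(1-\mu(B(x_1,t)))^{n-1}\bigr]\ge(1-Dt^d)^{n-1}$ by \eqref{e.ahlfors}, so
\[\E\widetilde D_1=\int_0^{n^{-1/d}}\!\!\Pr[D_1\ge t]\,dt\ \ge\ \int_0^{n^{-1/d}}\!\!(1-Dt^d)^{n-1}\,dt\ =\ n^{-1/d}\!\!\int_0^1\!\Bigl(1-\tfrac{Du^d}{n}\Bigr)^{n-1}\!du.\]
Since $(1-Du^d/n)^{n-1}\ge\tfrac12e^{-D}$ for $u\in[0,1]$ and $n$ large, this is $\ge c_0n^{-1/d}$ with $c_0=\tfrac12e^{-D}$, hence $\E\sum_i\widetilde D_i\ge c_0n^{1-1/d}$. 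Only the upper bound in \eqref{e.ahlfors} is used, and the probability-measure hypothesis is what makes the conditional computation clean.

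\emph{Step 3: concentration and conclusion.} Regard $f=\sum_i\widetilde D_i$ as a function of the i.i.d.\ sample $x_1,\dots,x_n$. Replacing one coordinate $x_k$ changes $\widetilde D_k$ by at most $n^{-1/d}$, and changes $\widetilde D_i$ for some $i\ne k$ only if $x_k$ (old configuration) or its replacement (new one) is the nearest neighbour of $x_i$ at distance below $n^{-1/d}$; since an Ahlfor's-regular space is doubling, a standard packing argument bounds by a constant $\kappa=\kappa(C,D,d)$ the number of points of any configuration whose nearest neighbour is a fixed point, so at most $2\kappa+1$ terms move, each by $\le n^{-1/d}$, and $f$ has bounded differences with $c_k\le(2\kappa+1)n^{-1/d}$. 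McDiarmid's inequality gives $\Pr[f\le\E f-t]\le\exp\!\bigl(-2t^2/((2\kappa+1)^2n^{1-2/d})\bigr)$; taking $t=\tfrac12\E f=\Theta(n^{1-1/d})$ makes the exponent $\Theta(n)$, so $\Pr[\text{opt}(X_n)\le\tfrac{c_0}{2}n^{1-1/d}]\le e^{-\Omega(n)}$, which is summable, and Borel--Cantelli yields the theorem with $C=c_0/2$.

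\emph{Main obstacle.} The one genuinely delicate point is concentration when $1<d\le2$: a crude bounded-differences estimate using differences of order $\diam(\cM)$ only gives deviation probability $e^{-\Omega(n^{1-2/d})}$, which is not summable, so the truncation at scale $n^{-1/d}$ together with the bounded nearest-neighbour in-degree of a doubling space is exactly what rescues the argument. I would either quote that in-degree bound or include the short packing proof. An alternative that sidesteps it entirely is to estimate $\operatorname{Var}(\sum_i\widetilde D_i)=O(n^{1-2/d})=o\bigl((\E\sum_i\widetilde D_i)^2\bigr)$ (the covariances being small because the relevant neighbourhoods of $x_i$ and $x_j$ are typically disjoint), apply Chebyshev along the subsequence $n_k=k^2$, and then use that $\text{opt}(X_n)$ is non-decreasing as points are added to interpolate to all $n$.
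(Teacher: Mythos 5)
Your overall strategy---lower-bound the tour by the sum of (truncated) nearest-neighbour distances, compute the expectation from the upper Ahlfor's bound, then concentrate---is the same as the paper's, and your Steps 1 and 2 are correct. The gap is exactly where you suspected it: the lemma you invoke in Step 3, that in a doubling space at most $\kappa=\kappa(C,D,d)$ points of a configuration can have a given point as nearest neighbour, is false. Take the star metric with centre $p$ and leaves $q_1,\dots,q_m$ at distances $\dist(p,q_i)=2^{-i}$, with $\dist(q_i,q_j)=2^{-i}+2^{-j}$: this space is doubling with constant at most $3$ uniformly in $m$ (any ball $B(x,r)$ is contained in $\{x\}\cup B(p,r)$, and $B(p,r)\setminus B(p,r/2)$ contains at most one leaf, since the $2^{-i}$ are dyadic), yet $p$ is the strict nearest neighbour of all $m$ leaves. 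The ``short packing proof'' you would include only shows that the balls $B(q_i,\dist(p,q_i)/2)$ are disjoint, which under Ahlfor's regularity gives $\sum_i \dist(p,q_i)^d=O(R^d)$---a bound that permits unboundedly many $q_i$ at geometrically decreasing distances, exactly as in the example. (In $\R^N$ the in-degree bound is true, but via the $60^\circ$-angle argument, which is genuinely Euclidean; the theorem here is for arbitrary Ahlfor's regular spaces, where the statement is at best delicate and certainly not standard.) Without $\kappa$, your Lipschitz constant is not $O(n^{-1/d})$ and McDiarmid gives nothing summable for $1<d\le 2$.

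The paper's fix for this same difficulty is instructive: it works with the indicators $Z_i=\mathbf{1}[D_i\ge r]$ for $r=(Dn)^{-1/d}$ and applies Warnke's \emph{typical} bounded differences inequality, where the typical event is that every ball of radius $r$ contains at most $\log^2 n$ sample points; on that event moving one $x_k$ changes at most $O(\log^2 n)$ of the $Z_i$, and the atypical event has probability $n^{-\log n}$. This replaces your worst-case in-degree bound by a high-probability count of points in a small ball, letting the randomness rather than the geometry of $\cM$ do the work; you could graft the same device onto your truncated sum. Your fallback---the second-moment bound $\operatorname{Var}\bigl(\sum_i\widetilde D_i\bigr)=O(n^{1-2/d})$, Chebyshev along $n_k=k^2$, and monotonicity of the optimal tour under point insertion---is a sound alternative route, but the covariance estimate is currently only asserted: you would need to verify that far pairs are essentially negatively correlated (disjoint empty-ball events for i.i.d.\ samples) and that close pairs with $\dist(x_i,x_j)\le 2n^{-1/d}$ contribute $O(n^2\cdot n^{-1}\cdot n^{-2/d})=O(n^{1-2/d})$ in total. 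Either repair works; as written, Step 3 does not.
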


Our main result for the nearest-neighbor and greedy heuristics is then the following:
\begin{theorem}\label{t.NN}
If the bounded metric space $\cM$ admits a $d$-Ahlfor's regular measure, then there is a constant $C$ and an $n_0$ such that for any $n$ points in $\cM$ with $n\geq n_0$, the nearest-neighbor and greedy algorithms produce a tour of length at most $Cn^{1-\frac{1}{d}}$.
\end{theorem}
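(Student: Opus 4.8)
The plan is to deduce both bounds from a single combinatorial estimate about the tour $T$ that either heuristic produces on an arbitrary $n$-point set $P\subseteq\cM$: writing $\Delta=\diam(\cM)$, for every $s\in(0,\Delta]$ the number of edges of $T$ of length greater than $s$ is at most $\min\{n,\,c\,(\Delta/s)^d\}$, where $c$ depends only on $d$ and the Ahlfor's constants. The bound $n$ is trivial, and the point of the other bound is an elementary packing fact: if $S\subseteq\cM$ is \emph{$s$-separated} (pairwise distances exceeding $s$) then the balls $B(x,s/2)$, $x\in S$, are pairwise disjoint, each has $\mu$-measure at least $C(s/2)^d$, and their union lies in $\cM$, which has $\mu(\cM)=O(\Delta^d)$; hence $|S|=O((\Delta/s)^d)$. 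So in each case it suffices to produce, for a given $s$, an $s$-separated set of size at least the number of edges of $T$ longer than $s$.

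For nearest neighbour this is immediate. List the points $p_1,\dots,p_n$ in visiting order and set $e_i=\dist(p_i,p_{i+1})$. When the walk stands at $p_i$ the points $p_{i+1},\dots,p_n$ are still unvisited, and $p_{i+1}$ is the closest of them, so $\dist(p_i,p_j)\ge e_i$ whenever $i<j$. Thus for any $s$ the set $\{p_i:\,e_i>s\}$ is $s$-separated, and its size is exactly the number of edges of $T$ longer than $s$. For greedy one argues through the state of the algorithm. As edges are added, the set of \emph{available} edges --- those whose endpoints currently have degree at most $1$ and that do not close a cycle --- can only shrink, so the successive greedy edges have non-decreasing lengths. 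Fix $s$ so that $T$ has an edge longer than $s$, and let $G_0$ be the forest of vertex-disjoint paths present just before the first such edge is inserted. Then every edge of $G_0$ has length at most $s$ while every edge inserted afterward is longer than $s$, so the number of edges of $T$ longer than $s$ equals the number of components of $G_0$; moreover, two available vertices of $G_0$ lying in distinct components must be at distance greater than $s$, since otherwise that shorter available edge would have been chosen before any edge of length $>s$. Selecting one available vertex from each component of $G_0$ therefore gives an $s$-separated set of size equal to the number of components, and the estimate follows.

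To finish, group the edges of $T$ into bands $\cJ_k$, $k\ge 0$, of length in $\left(2^{-(k+1)}\Delta,\ 2^{-k}\Delta\right]$; applying the estimate with $s=2^{-(k+1)}\Delta$ gives $|\cJ_k|\le\min\{n,\,c'2^{kd}\}$ while every edge of $\cJ_k$ has length at most $2^{-k}\Delta$, so
\[
\mathrm{length}(T)\ \le\ \Delta\sum_{k\ge0}\min\{n,\,c'2^{kd}\}\,2^{-k}.
\]
Splitting the sum at the index $k^{\ast}$ with $2^{k^{\ast}}=\Theta(n^{1/d})$: the terms with $k<k^{\ast}$ are bounded by $c'2^{k(d-1)}$ and, since $d>1$, sum to $O(2^{k^{\ast}(d-1)})=O(n^{1-1/d})$; the terms with $k\ge k^{\ast}$ are bounded by $n\,2^{-k}$ and sum to $O(n\,2^{-k^{\ast}})=O(n^{1-1/d})$. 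Hence $\mathrm{length}(T)=O(n^{1-1/d})$, which is at most $Cn^{1-1/d}$ for a suitable constant $C$ once $n\ge n_0$. (It is instructive that for $d=1$ the same computation yields $\Theta(\log n)$, matching the classical worst case; this is precisely why the dimension hypothesis $d>1$ is what forces the bad instances to have atypically short optimal tours.)

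I expect the greedy structural claim to be the part needing the most care --- in particular verifying that greedy edge-lengths are genuinely monotone and that the forest $G_0$ certifies the required separation. The remaining ingredients --- the precise range of radii in the Ahlfor's inequality and the consequent bound $\mu(\cM)=O(\Delta^d)$, and the dyadic bookkeeping --- are routine; and because everything is phrased through $\mu$, the non-integrality of $d$ causes no difficulty.
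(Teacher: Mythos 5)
Your proof is correct and follows essentially the same route as the paper's: a packing bound from Ahlfor's regularity controlling the number of tour edges in each dyadic length band, followed by a geometric sum that converges precisely because $d>1$. The one place you genuinely diverge is the greedy case, where instead of the paper's single ball-family containment property you use the monotonicity of greedy edge lengths together with a component count of the partial forest to exhibit an $s$-separated set for each threshold $s$ --- a valid (and arguably more carefully justified) certificate of the same packing phenomenon.
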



\section{Proofs}
\begin{proof}[Proof of Theorem \ref{t.lb}]
Let $D$ be the constant from \eqref{e.ahlfors} guaranteed to exist for $(\cM,\mu)$.  Let $r=\left(\frac{1}{Dn}\right)^{1/d}$.  For any fixed $i$, let $Z_i$ be the indicator for the event $\mathcal E_i$ that $x_i$ is the unique point from $X_n$ in $B(x_i,r)$. Then 
\[
\Pr(\cE_i)\geq (1-Dr^d)^{n-1}\geq e^{-1}.
\]
Let $Z=Z_1+\cdots+Z_n$. Thus $\E(Z)\geq e^{-1}n$. Let $\cB$ be the event that there exists $i$ such that $B(x_i,r)$ contains more than $\log^2n$ points from $X_n$ other than $x_i$. Then\
\[
\Pr(\cB)\leq n\Pr(Bin(n,Dr^d)\geq \gamma)\leq \binom{n}{\gamma}(Dr^d)^\gamma\leq \bfrac{e}{\gamma}^\gamma\leq n^{-\log n}.
\]
If $\cB$ does not occur then changing the value of one $x_i$ only changes the value of $Z$ by at most $\log^2n$. If $\cB$ does occur then $Z$ could change by at most $n$. We will now use Warnke's {\em Typical bounded differences inequality} \cite{War} to show that $Z$ is concentrated around its mean. 
\begin{theorem}[Warnke]\label{Wcon}
Let $X=(X_1,\ldots,X_N)$ be a family of independent random variables with $X_k$ taking values in a set $\Lambda_k$. Let $\Gamma\subseteq \prod_{j\in [N]}\Lambda_j$ be an event and assume that the function $f:\prod_{j\in [N]}\Lambda_j\to \real$ satisfies the typical Lipschitz condition: there are numbers $c_k,k\in[N]$ and $d_k,k\in [N]$ such that whenever $x,y$ differ only in the $k$th coordinate, we have
\[
|f(x)-f(y)|\leq \begin{cases}c_k&\text{if }x\in \Gamma.\\d_k&\text{otherwise}.\end{cases}
\]
Then for all numbers $\g_k,k\in [N]$ with $\g_k\in (0,1)$, 
\begin{multline*}
\Pr(|f(X)-\E(f(X))|\geq t)\leq \\
2\exp\left\{-\frac{t^2}{2\sum_{k\in [N]}(c_k+\gamma_k(d_k-c_k))^2}\right\}+\Pr(X\notin\Gamma)\sum_{k\in[N]}
\gamma_k^{-1}.
\end{multline*}
\end{theorem}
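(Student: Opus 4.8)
The plan is to run the standard Doob (vertex-exposure) martingale argument, but with a truncation step that lets us pay the small Lipschitz constants $c_k$ in the exponent while charging the rare failures of the typical condition to the additive error term. Write $\mathcal{F}_k=\sigma(X_1,\dots,X_k)$ and $Y_k=\E[f(X)\mid\mathcal{F}_k]$, so that $Y_0=\E f(X)$, $Y_N=f(X)$, and $f(X)-\E f(X)=\sum_{k=1}^N D_k$ with $\mathcal{F}_{k}$-measurable, conditionally mean-zero differences $D_k=Y_k-Y_{k-1}$. Because the coordinates are independent, $D_k$ is an average of differences $f(x)-f(y)$ over configurations $x,y$ agreeing outside coordinate $k$, so its conditional range given $\mathcal{F}_{k-1}$ is governed by the local Lipschitz bounds. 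The hypothesis thus yields $|D_k|\le d_k$ always, and $|D_k|\le c_k$ on the part of the space where the relevant endpoint lies in $\Gamma$. Note that for $\gamma_k\in(0,1)$ the thresholds $e_k:=c_k+\gamma_k(d_k-c_k)$ interpolate between $c_k$ and $d_k$.

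Next I would truncate the martingale once at these thresholds: replace $D_k$ by $\widehat D_k$ which equals $D_k$ on $\{|D_k|\le e_k\}$ and is clipped there otherwise, subtracting its conditional mean so that $\widehat Y$, built from the $\widehat D_k$ with $\widehat Y_0=Y_0=\E f(X)$, remains a martingale with $|\widehat D_k|\le e_k$ and $\E\widehat Y_N=\E f(X)$. Applying Hoeffding's lemma to each conditional moment generating function, multiplying out via the tower property, optimizing over the exponential parameter, and symmetrizing then gives the two-sided bound $\Pr(|\widehat Y_N-\E f(X)|\ge t)\le 2\exp\{-t^2/(2\sum_k e_k^2)\}$, which is exactly the claimed sub-Gaussian term. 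Since all clippings and bias corrections vanish on the event $\bigcap_k\{|D_k|\le e_k\}$, there $\widehat Y_N=Y_N=f(X)$; hence $\Pr(|f(X)-\E f(X)|\ge t)\le 2\exp\{-t^2/(2\sum_k e_k^2)\}+\Pr\big(\bigcup_k\{|D_k|>e_k\}\big)$.

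It remains to bound the truncation probability by the additive error. The point is that $|D_k|$ can exceed $c_k$ only through configurations where the local $c_k$-bound fails, an event contained in $\{X\notin\Gamma\}$, while the excess of $|D_k|$ over $c_k$ never surpasses $d_k-c_k$; a conditional Markov estimate then converts the gap $\gamma_k(d_k-c_k)$ into the factor $\gamma_k^{-1}$, giving $\Pr(|D_k|>e_k)\le \Pr(X\notin\Gamma)/\gamma_k$. Summing over $k$ and using a union bound produces $\Pr\big(\bigcup_k\{|D_k|>e_k\}\big)\le \Pr(X\notin\Gamma)\sum_k\gamma_k^{-1}$, and because this truncation event is defined through $|D_k|$ it is insensitive to the sign change $f\mapsto -f$, so it is incurred only once and the final bound carries the factor $2$ only on the exponential. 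The step I expect to be the main obstacle is making this charging argument rigorous: the notion of ``step $k$ is good'' is not $\mathcal{F}_k$-measurable, since $D_k$ is a conditional average over the still-unexposed coordinates, so the Markov bound must be set up against the conditional law of the resampled coordinate rather than against membership in $\Gamma$ pointwise, while simultaneously preserving the clean $e_k$ increment bound needed for the Gaussian term.
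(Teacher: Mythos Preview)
The paper does not prove this theorem; it is quoted verbatim from Warnke's paper and used as a black box in the proof of Theorem~\ref{t.lb}. So there is no ``paper's own proof'' to compare against, only Warnke's original argument.

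Your outline is in the right spirit, but the difficulty is not where you think it is, and there is a genuine gap elsewhere. First, your worry at the end is misplaced: $D_k=Y_k-Y_{k-1}$ \emph{is} $\mathcal{F}_k$-measurable, and the charging step goes through cleanly. Writing $D_k$ as an average over independent copies of $X_k',X_{k+1},\dots,X_N$ of a single-coordinate difference, one gets $|D_k|\le c_k+(d_k-c_k)\,q_k$ with $q_k=\Pr\big(\text{both configurations}\notin\Gamma\,\big|\,X_1,\dots,X_k\big)$; then $\{|D_k|>e_k\}\subseteq\{q_k>\gamma_k\}$, $\E q_k\le\Pr(X\notin\Gamma)$, and Markov gives exactly $\Pr(|D_k|>e_k)\le\Pr(X\notin\Gamma)/\gamma_k$.

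The real gap is your claim that ``all clippings and bias corrections vanish on the event $\bigcap_k\{|D_k|\le e_k\}$''. The bias correction $\E[\mathrm{clip}_{e_k}(D_k)\mid\mathcal{F}_{k-1}]$ is an average over \emph{all} realisations of $X_k$ given the past, not just the one on the current sample path; it is nonzero whenever there is positive conditional probability of clipping, regardless of whether \emph{this} path was clipped. So on the good event you still have $\widehat Y_N\neq Y_N$ in general, and the coupling between the two martingales breaks. Conversely, if you drop the recentring to keep $\widehat Y_N=Y_N$ on the good event, the $\widehat D_k$ are no longer martingale differences and Azuma--Hoeffding does not apply. Neither variant of your truncation closes.

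Warnke's actual proof avoids this trap by working at the level of the function rather than the martingale: he constructs $g:\prod_k\Lambda_k\to\real$ with $g=f$ on $\Gamma$ and global coordinate-Lipschitz constants $e_k$ (roughly $g(x)=\inf_{y\in\Gamma}\{f(y)+\sum_k e_k\mathbf{1}[x_k\neq y_k]\}$), applies the ordinary bounded differences inequality to $g$, and then controls $|\E f-\E g|$ and $\Pr(f\neq g)$ by $\Pr(X\notin\Gamma)$ times the appropriate factors. This sidesteps the recentring issue entirely because McDiarmid is applied to a single globally Lipschitz function.
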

We will apply this theorem with $f=Z,N=n,X=\{x_1,\ldots,x_n\},\Gamma=\cB^c$ and $c_k=\gamma,d_k=n,\g_k=\log^2n,\gamma_k=n^{-2}$ for $k\in[n]$. This yilelds
\[
\Pr(Z\leq \E(Z)-n^{2/3})\leq 2\exp\left\{-\frac{n^{4/3}}{n(\log^2n+1)^2}\right\}+n^{3-\log n}=o(1).
\]
So, w.h.p. there are at least $n/3$ of the $x_i$ that are at least $r$ from their nearest neighbor. Theorem \ref{t.lb} follows immediately.
\end{proof}
\begin{proof}[Proof of Theorem \ref{t.NN}]

Consider any nearest-neighbor or greedy tour 
$x_1,\dots,x_n$
through the point-set $X=\{x_1,\dots,x_n\} \in \cM$.  We define a sequence of open balls $B_1,\dots,B_{n-1}$, where $B_i$ is centered at $x_i$ and has radius $\dist(x_i,x_{i+1})$.  Observe that when the edge from $\{x_i,x_{i+1}\}$ is selected, there can be no other vertices $x_j$ which would be available for selection but are closer to $x_i$ than $\dist(x_i,x_{i+1})$. This implies that the family $\cB=\{B_i\}$ has the following property: 
\smallskip

\noindent \textbf{($\star$)} For any distinct balls $B_i,B_j\in \cB$, we have either that the $B_i$ doesn't contain the center of $B_j$ or that $B_j$ doesn't contain the center of $B_i$ (according to whether $i<j$ or $j<i$, respectively).

\smallskip


Now we partition $\cB$ into sets $\cB_1,\cB_2,\dots$, where each $\cB_j$ consists of every ball $D\in \cB$ whose radius $r$ satisfies $\frac 1 {2^j} <r\leq \frac 1 {2^{j-1}}$.

Now each family $\cB_i$ consists of balls whose radii differ by at most a factor of 2.  In particular, as \textbf{($\star$)} implies that the distance between the center of two balls in $\cB$ is at least the minimum of the radii of the two balls, within each family $\cB_i$, we know that the distance between the centers of two balls is at least half the maximum of the radii of the two balls.  In particular, if we define families $\tilde \cB_i$ by rescaling the balls in each family $\cB_i$ by a factor of $\frac 1 2$,  then each family $\tilde \cB_i$ is a family of disjoint balls.  As such, we have from the condition \eqref{e.ahlfors} that
\begin{equation}\label{e.ballbound}
|\cB_k|\leq C 2^{kd},
\end{equation}
for a fixed constant $C$ depending only on the metric space $\cM$.

In particular, we can bound the total length $L$ of the nearest neighbor tour by the radii $r(B)$ of the balls $B\in \cB$ as follows:
\begin{equation}
L\leq \sum_{B\in \cB} r(B)=\sum_{k\geq 1}\sum_{B\in \cB_k}r(B)\leq C_0\sum_{k=1}^{k_0} 2^{k(d-1)}\leq C_0\frac{2^{k_0d(1-1/d)}}{2^{d-1}-1},
\end{equation}
where $k_0$ is smallest integer for which the bound $C2^{k_0 d}$ on $|\cB_{k_0}|$ from \eqref{e.ballbound} exceeds $n$.  We have thus that for any $d>1$ and a constant $C_1$ depending on the metric space $\cM$ but not the point set $X$, that
\[
L\leq C_1 n^{1-\frac 1 d},
\]
proving the theorem.
\end{proof}

\end{document}

We define the $\delta$-packing number $\Ndp(S)$ of a set $S$ in a metric space $\cM$ as the maximum size of a family of of disjoint $\delta$-balls centered at points of $S$.  Recall that the upper and lower Minkowski dimensions $\overline{\dim}(S)$ and $\underline{\dim}(S)$ can then be defined as 
\begin{align}
\overline{\dim}(S)&=\limsup_{\delta\to 0}\log_{1/\delta}\Ndp(S),\\
\underline{\dim}(S)&=\liminf_{\delta\to 0}\log_{1/\delta}\Ndp(S).
\end{align}
Recall that the Minkowski dimension is defined as $d$ whenever these two values agree at a value $d$, and is otherwise undefined. 
We note that the hypothesis of Ahlfor's regularity here can be replaced with the (incomparable) condition that the upper Minkowski dimension $\overline{\dim}(\cM)$ of $\cM$ is strictly less than $d$.